\newtheorem{theorem}{Theorem}
\newtheorem{lemma}[theorem]{Lemma}
\newenvironment{proof}[1][Proof]{\noindent\textbf{#1.} }{\ \rule{0.5em}{0.5em}}
\begin{document}
\title{Entropy Bound for the Classical Capacity of a Quantum Channel Assisted by Classical Feedback} 



 \author{%
   \IEEEauthorblockN{Dawei Ding\IEEEauthorrefmark{1},
                     Yihui Quek\IEEEauthorrefmark{2},
                     Peter W.~Shor\IEEEauthorrefmark{3},
                     and Mark M.~Wilde\IEEEauthorrefmark{4}}
   \IEEEauthorblockA{\IEEEauthorrefmark{1}%
                     Stanford Institute for Theoretical Physics,
				     Stanford University, Stanford, California 94305, USA,
                     dding@stanford.edu}
   \IEEEauthorblockA{\IEEEauthorrefmark{2}%
                     Information Systems Laboratory, Stanford University, Stanford, California 94305, USA,
                     yquek@stanford.edu}
   \IEEEauthorblockA{\IEEEauthorrefmark{3}%
                     Center for Theoretical Physics, Massachusetts Institute of Technology, Cambridge, Massachusetts 02139, USA,\\
					Department of Mathematics, Massachusetts Institute of Technology, Cambridge, Massachusetts 02139, USA,
                     shor@math.mit.edu}
   \IEEEauthorblockA{\IEEEauthorrefmark{4}%
                     Hearne Institute for Theoretical Physics, Department of Physics and Astronomy, Center for Computation and Technology,\\Louisiana State University, Baton Rouge, Louisiana 70803, USA,
                     mwilde@lsu.edu}
 }

\maketitle

\begin{abstract}
  We prove that the classical capacity of an arbitrary quantum channel assisted by
a free classical feedback channel is bounded from above by the maximum average output entropy of the quantum channel.
As a consequence of this bound, we conclude that a classical feedback channel does not improve the classical capacity of a quantum erasure channel, and by taking into account energy constraints, we conclude the same for a pure-loss bosonic channel.
The method for establishing the aforementioned entropy bound involves identifying an information measure having two key properties: 1) it does not increase under a one-way
local operations and classical communication channel from the receiver to the sender and 2) a quantum channel from sender to receiver
cannot increase the information measure by more than the maximum output entropy of the channel. This information measure can be understood as the sum of two terms, with one corresponding to classical correlation and the other to entanglement. 
\end{abstract}


\section{Introduction}

A famous result of Shannon is that a free feedback channel does not increase the capacity of a
classical channel for communication \cite{S56}. That is, the feedback-assisted capacity is equal to
the channel's mutual information. Shannon's result indicates that the mutual information
formula for capacity is particularly robust, in the sense that, \textit{a priori}, one might
consider a feedback channel to be a strong resource for assisting communication.

With the rise of quantum information theory, 
several researchers have found variations and generalizations of Shannon's
aforementioned result, in the context of communication over quantum channels. For example, Bowen proved that 
the capacity of a quantum channel for sending classical messages, when assisted by a free quantum feedback channel,
is equal to the channel's entanglement-assisted capacity \cite{B04}, which is in turn equal to the mutual information
of a quantum channel \cite{PhysRevLett.83.3081,ieee2002bennett,Hol01a}. This result indicates that the mutual information
of a quantum channel is robust, in a sense similar to that mentioned above. The result also indicates that the best strategy, in the limit of many channel uses, is to use the quantum feedback channel once in order to establish sufficient shared entanglement between the sender and receiver, and to subsequently employ an entanglement-assisted communication protocol \cite{PhysRevLett.83.3081,ieee2002bennett,Hol01a}.
Bowen's result was strengthened
to a strong converse statement in \cite{BDHSW09,CMW14}.
Bowen \textit{et al.} proved that the capacity of an entanglement-breaking channel for sending classical messages is not increased by a free classical feedback channel \cite{BN05}, and this result was strengthened to a strong converse statement in \cite{DingW18}. Ref.~\cite{BDSS06} discussed several inequalities relating the classical capacity assisted by classical feedback to other capacities. At the same time, it is known that in general there can be an arbitrarily large gap between the unassisted classical capacity and the  classical capacity assisted by classical feedback \cite{SS09}.

Our aim here is to go beyond \cite{BN05} to establish
an upper bound on the classical capacity of an \textit{arbitrary}, not just entanglement-breaking, quantum channel
assisted by a classical feedback channel. Due to the fact that a quantum feedback channel is a stronger resource than a classical feedback channel, an immediate consequence of Bowen's result \cite{B04} is that the entanglement-assisted capacity is an upper bound on the classical capacity assisted by classical feedback.
However, since a quantum channel
can, in general, establish quantum entanglement \cite{PhysRevA.55.1613,capacity2002shor,ieee2005dev} and entanglement can increase capacity \cite{PhysRevLett.83.3081,ieee2002bennett,Hol01a}, in such cases it may appear difficult to establish an upper bound on this capacity other than the entanglement-assisted capacity.
Our main result is that the latter is actually possible: we prove here that the maximum output entropy of a quantum channel is an upper bound on its classical
capacity assisted by classical feedback. As a generalization of this result, we find that the maximum average output entropy is an upper bound on the same capacity for a channel that is a probabilistic mixture of other channels.

The approach that we take for establishing the aforementioned bounds
is similar in spirit to approaches used to bound other assisted capacities or protocols \cite{TGW14,CM17,KW17,BHKW18}.
We identify an information measure that has two key properties: 1)  it does not increase under a free operation, which in this case is a one-way
local operations and classical communication (1W-LOCC)
channel from the receiver to the sender, and 2) a quantum channel from sender to receiver
cannot increase the information measure by more than the maximum output entropy of the channel. This information measure can be understood as the sum of two terms, with one corresponding to classical correlation and the other to entanglement.

We organize the rest of the paper as follows. Section~\ref{sec:feedback-protocol} provides a formal definition
of a protocol for classical communication over a quantum channel assisted by classical feedback.
Section~\ref{sec:purified-protocol} discusses explicitly how to purify such a protocol, which is an important conceptual step for our analysis. Section~\ref{sec:info-measure}  introduces our key information measure and several important supplementary lemmas regarding it. Section~\ref{sec:max-out-ent-bnd} then employs this information measure and the supplementary lemmas to establish the maximum output entropy bound for classical capacity assisted by classical feedback.
We apply this bound to the erasure channel and pure-loss bosonic channel in Section~\ref{sec:examples}. We conclude in Section~\ref{sec:conclusion}.

\section{Protocol for classical communication over a quantum channel assisted
by classical feedback}

\label{sec:feedback-protocol}

To begin with, let $n,M\in\mathbb{N}$, let
$\varepsilon\in\left[  0,1\right]  $, and let $E\geq0$. Let $\mathcal{N}%
_{A\rightarrow B}$ be a quantum channel, and let $H$ be a Hamiltonian acting
on the input system $A$ of $\mathcal{N}_{A\rightarrow B}$. An
$(n,M,H,E,\varepsilon)$ protocol for classical communication over a quantum
channel $\mathcal{N}_{A\rightarrow B}$ consists of $n$ uses of the quantum
channel $\mathcal{N}_{A\rightarrow B}$, along with the assistance of a
classical feedback channel from the receiver Bob to the sender Alice, in order for Alice to send one
of $M$ messages to Bob such that the error probability is no larger than
$\varepsilon$. Furthermore, the average state at the input of each channel use
should have energy no larger than $E$, when taken with respect to the
Hamiltonian $H$.

In more detail, the protocol consists of an initial classical--quantum state
$\sigma_{F_{0}B_{1}^{\prime}}$, with $F_{0}$ classical and $B_{1}^{\prime}$
quantum, of the form%
\begin{equation}
\sigma_{F_{0}B_{1}^{\prime}}=\sum_{f_{0}}p(f_{0})|f_{0}\rangle\langle
f_{0}|_{F_{0}}\otimes\sigma_{B_{1}^{\prime}}^{f_{0}}.
\end{equation}
It also involves $n$ encoding channels, with each one denoted by
$\mathcal{E}_{A_{i-1}^{\prime}F_{i-1}\rightarrow A_{i}^{\prime}A_{i}}^{i}$ for
$i\in\left\{  1,\ldots,n\right\}  $, as well as $n$ decoding channels, with
each of them denoted by $\mathcal{D}_{B_{i}B_{i}^{\prime}\rightarrow
F_{i}B_{i+1}^{\prime}}^{i}$ for $i\in\left\{  1,\ldots,n-1\right\}  $. Note
that all $F$ systems are classical because the feedback channel is constrained
to be a classical channel. So this means that each decoding channel is a
quantum instrument. The final decoding is denoted by $\mathcal{D}_{B_{n}%
B_{n}^{\prime}\rightarrow\hat{W}}^{n}$.

We now detail the form of such a protocol. It begins with Alice preparing the following
classical--quantum state:%
\begin{equation}
\rho_{WA_{0}^{\prime}}\equiv\frac{1}{M}\sum_{m=1}^{M}|m\rangle\langle
m|_{W}\otimes\rho_{A_{0}^{\prime}}^{m},
\end{equation}
for some set $\{\rho_{A_{0}^{\prime}}^{m}\}_{m}$ of quantum states. The global
initial state is then $\rho_{WA_{0}^{\prime}}\otimes\sigma_{F_{0}B_{1}%
^{\prime}}$. Alice then performs the encoding channel $\mathcal{E}%
_{A_{0}^{\prime}F_{0}\rightarrow A_{1}^{\prime}A_{1}}^{1}$ and the state
becomes as follows:%
\begin{equation}
\omega_{WA_{1}^{\prime}A_{1}B_{1}^{\prime}}^{(1)}\equiv\mathcal{E}%
_{A_{0}^{\prime}F_{0}\rightarrow A_{1}^{\prime}A_{1}}^{1}(\rho_{WA_{0}%
^{\prime}}\otimes\sigma_{F_{0}B_{1}^{\prime}}).
\end{equation}
Alice transmits the $A_{1}$ system through the first use of the channel
$\mathcal{N}_{A_{1}\rightarrow B_{1}}$, resulting in the following state:%
\begin{equation}
\rho_{WA_{1}^{\prime}B_{1}B_{1}^{\prime}}^{(1)}\equiv \mathcal{N}_{A_{1}\rightarrow
B_{1}}(\omega^{(1)}_{WA_{1}^{\prime}A_{1}B_{1}^{\prime}}).
\end{equation}
Bob processes his systems $B_{1}B_{1}^{\prime}$ with the decoding channel
$\mathcal{D}_{B_{1}B_{1}^{\prime}\rightarrow F_{1}B_{2}^{\prime}}^{1}$ and
Alice acts with the encoding channel $\mathcal{E}_{A_{1}^{\prime}%
F_{1}\rightarrow A_{2}^{\prime}A_{2}}^{2}$, resulting in the state%
\begin{equation}
\omega_{WA_{2}^{\prime}A_{2}B_{2}^{\prime}}^{(2)}\equiv(\mathcal{E}%
_{A_{1}^{\prime}F_{1}\rightarrow A_{2}^{\prime}A_{2}}^{2}\circ\mathcal{D}%
_{B_{1}B_{1}^{\prime}\rightarrow F_{1}B_{2}^{\prime}}^{1})(\rho_{WA_{1}%
^{\prime}B_{1}B_{1}^{\prime}}^{(1)}).
\end{equation}
This process iterates $n-2$ more times, resulting in the following states:%
\begin{equation}
\rho_{WA_{i}^{\prime}B_{i}B_{i}^{\prime}}^{(i)}   \equiv\mathcal{N}%
_{A_{i}\rightarrow B_{i}}(\omega_{WA_{i}^{\prime}A_{i}B_{i}^{\prime}}%
^{(i)}),
\end{equation}
\vspace{-.25in}
\begin{multline}
\omega_{WA_{i+1}^{\prime}A_{i+1}B_{i+1}^{\prime}}^{(i+1)}  \equiv 
\\
(\mathcal{E}_{A_{i}^{\prime}F_{i}\rightarrow A_{i+1}^{\prime}A_{i+1}}%
^{i+1}\circ\mathcal{D}_{B_{i}B_{i}^{\prime}\rightarrow F_{i}B_{i+1}^{\prime}%
}^{i})(\rho_{WA_{i}^{\prime}B_{i}B_{i}^{\prime}}^{(i)}),
\end{multline}
for $i\in\left\{  2,\ldots,n-1\right\}  $. The final decoding
(measurement)\ channel $\mathcal{D}_{B_{n}B_{n}^{\prime}\rightarrow\hat{W}%
}^{n}$\ results in the following state:%
\begin{equation}
\rho_{W\hat{W}}\equiv(\operatorname{Tr}_{A_{n}^{\prime}}\circ\mathcal{D}%
_{B_{n}B_{n}^{\prime}\rightarrow\hat{W}}^{n})(\rho_{WA_{n}^{\prime}B_{n}%
B_{n}^{\prime}}^{(n)}).
\end{equation}
Figure~\ref{fig:feedback-prot} depicts the above protocol for $n=3$.%
\begin{figure}
[ptb]
\begin{center}
\includegraphics[
width=3.7in
]%
{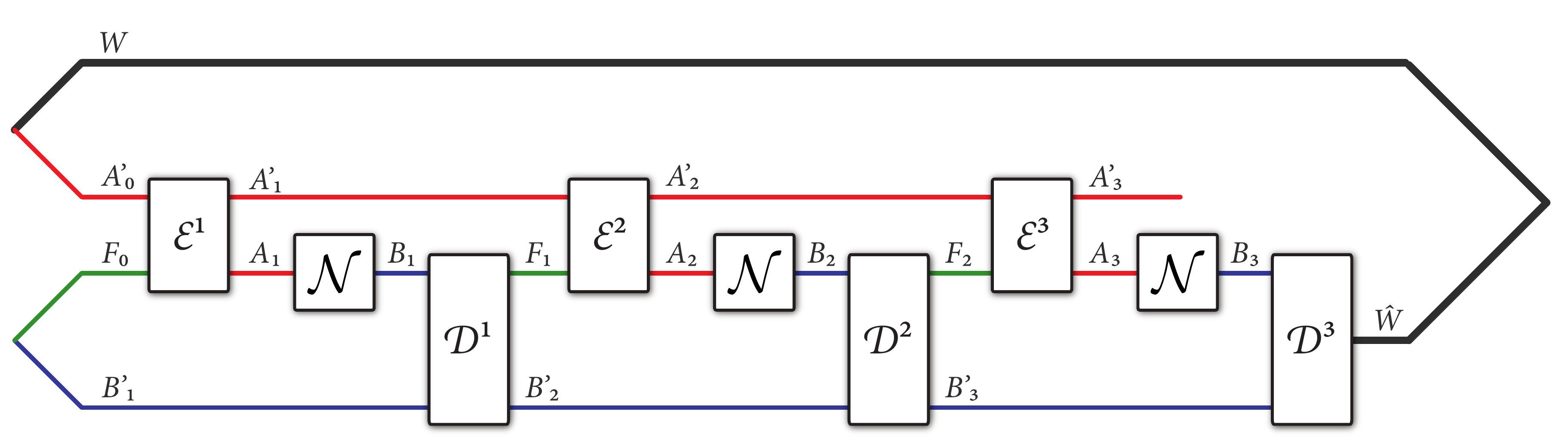}%
\caption{A protocol for classical communication over three uses of a quantum channel $\mathcal{N}_{A \to B}$, when
assisted by a classical feedback channel.}%
\label{fig:feedback-prot}%
\end{center}
\end{figure}

For an $(n,M,H,E,\varepsilon)$ protocol, the following is satisfied%
\begin{equation}
\tfrac{1}{2}\left\Vert \overline{\Phi}_{W\hat{W}}-\rho_{W\hat{W}}\right\Vert
_{1}\leq\varepsilon,\label{eq:good-protocol-condition}%
\end{equation}
where $\overline{\Phi}_{W\hat{W}}\equiv\frac{1}{M}\sum_{m=1}^{M}%
|m\rangle\langle m|_{W}\otimes|m\rangle\langle m|_{\hat{W}}$ is the maximally
classically correlated state. Note that
$
\tfrac{1}{2}\left\Vert \overline{\Phi}_{W\hat{W}}-\rho_{W\hat{W}}\right\Vert
_{1}=\Pr\{\hat{W}\neq W\}$,
where $W$ here denotes the uniform random variable corresponding to the
message choice and $\hat{W}$ denotes the random variable corresponding to the
classical value in the register $\hat{W}$ of the state $\rho_{W\hat{W}}$.
Furthermore, the following energy constraint applies as well:%
\begin{equation}
\operatorname{Tr}\{H\overline{\omega}_{A}\}  \leq
E,\label{eq:energy-constraint} \qquad
\overline{\omega}_{A}   \equiv\frac{1}{n}\sum_{i=1}^{n}\omega_{A_{i}}%
^{(i)},
\end{equation}
which limits the energy of the average input state.

\section{Purified protocol}

\label{sec:purified-protocol}

Our goal is to bound the rate of such a protocol. With this in mind, we can
devise a protocol that simulates the above one. It consists of purifying each
step of the above protocol and Bob keeping a copy of the classical feedback,
such that at each time step, conditioned on the value of the message in $W$
and the feedback in the existing systems labeled by $F$, the state is pure. To
be clear, we go through the steps of the purified protocol. In order to
simplify notation, we let $\hat{A}$ be a joint system throughout, referring to
both the original $A^{\prime}$ system as well as a purifying system, and we
take the same convention for $\hat{B}$. The initial state of Alice is as
follows:%
\begin{equation}
\rho_{W\hat{A}_{0}}\equiv\frac{1}{M}\sum_{m=1}^{M}|m\rangle\langle
m|_{W}\otimes\psi_{\hat{A}_{0}}^{m},
\end{equation}
where $\psi_{\hat{A}_{0}}^{m}$ is a purification of $\rho_{A_{0}^{\prime}}%
^{m}$, such that tracing over a subsystem of $\psi_{\hat{A}_{0}}^{m}$ gives
$\rho_{A_{0}^{\prime}}^{m}$. The initial state of Bob is as follows:%
\begin{equation}
\sigma_{F_{0}F_{0}^{\prime}\hat{B}_{1}}\equiv\sum_{f_{0}}p(f_{0})|f_{0}%
\rangle\langle f_{0}|_{F_{0}}\otimes|f_{0}\rangle\langle f_{0}|_{F_{0}%
^{\prime}}\otimes\varphi_{\hat{B}_{1}}^{f_{0}},
\end{equation}
where $\varphi_{\hat{B}_{1}}^{f_{0}}$ is a purification of $\sigma
_{B_{1}^{\prime}}^{f_{0}}$, such that tracing over a subsystem of
$\varphi_{\hat{B}_{1}}^{f_{0}}$ gives $\sigma_{B_{1}^{\prime}}^{f_{0}}$, and
he keeps an extra copy $F_{0}^{\prime}$ of the classical data. Let
$\mathcal{U}_{\hat{A}_{i-1}F_{i-1}\rightarrow\hat{A}_{i}A_{i}}^{i}$ denote an
isometric channel extending the encoding channel $\mathcal{E}_{A_{i-1}%
^{\prime}F_{i-1}\rightarrow A_{i}^{\prime}A_{i}}^{i}$, for $i\in\left\{
1,\ldots,n\right\}  $. Since the system $F_{i}$ is classical, for
$i\in\left\{  1,\ldots,n-1\right\}  $, the decoding channel $\mathcal{D}%
_{B_{i}B_{i}^{\prime}\rightarrow F_{i}B_{i+1}^{\prime}}^{i}$ can be written
explicitly as%
\begin{equation}
\mathcal{D}_{B_{i}B_{i}^{\prime}\rightarrow F_{i}B_{i+1}^{\prime}}^{i}%
=\sum_{f_{i}}\mathcal{D}_{B_{i}B_{i}^{\prime}\rightarrow B_{i+1}^{\prime}%
}^{i,f_{i}}\otimes|f_{i}\rangle\langle f_{i}|_{F_{i}},
\end{equation}
such that $\{\mathcal{D}_{B_{i}B_{i}^{\prime}\rightarrow B_{i+1}^{\prime}%
}^{i,f_{i}}\}_{f_{i}}$\ is a collection of completely positive maps such that
the sum map $\sum_{f_{i}}\mathcal{D}_{B_{i}B_{i}^{\prime}\rightarrow
B_{i+1}^{\prime}}^{i,f_{i}}$ is trace preserving. Let $V_{B_{i}\hat{B}%
_{i}\rightarrow\hat{B}_{i+1}}^{i,f_{i}}$ be a linear map such that tracing
over a subsystem of $V_{B_{i}\hat{B}_{i}\rightarrow\hat{B}_{i+1}}^{i,f_{i}%
}(\cdot)[V_{B_{i}\hat{B}_{i}\rightarrow\hat{B}_{i+1}}^{i,f_{i}}]^{\dag}$ gives
the original map $\mathcal{D}_{B_{i}B_{i}^{\prime}\rightarrow B_{i+1}^{\prime
}}^{i,f_{i}}$, and define the map%
\begin{equation}
\mathcal{V}_{B_{i}\hat{B}_{i}\rightarrow\hat{B}_{i+1}}^{i,f_{i}}(\tau
_{B_{i}\hat{B}_{i}})\equiv V_{B_{i}\hat{B}_{i}\rightarrow\hat{B}_{i+1}%
}^{i,f_{i}}\tau_{B_{i}\hat{B}_{i}}[V_{B_{i}\hat{B}_{i}\rightarrow\hat{B}%
_{i+1}}^{i,f_{i}}]^{\dag}.\notag
\end{equation}
Then we define the enlarged decoding channel $\mathcal{V}_{B_{i}\hat{B}%
_{i}\rightarrow F_{i}\hat{B}_{i+1}F_{i}^{\prime}}^{i}$ as%
\begin{equation}
\mathcal{V}_{B_{i}\hat{B}_{i}\rightarrow F_{i}\hat{B}_{i+1}F_{i}^{\prime}}%
^{i}\equiv\sum_{f_{i}}\mathcal{V}_{B_{i}\hat{B}_{i}\rightarrow\hat{B}_{i+1}%
}^{i,f_{i}}\otimes|f_{i}\rangle\langle f_{i}|_{F_{i}}\otimes|f_{i}%
\rangle\langle f_{i}|_{F_{i}^{\prime}}.\notag
\end{equation}
Note that this enlarged decoding channel keeps an extra copy of the classical
feedback value for Bob in the register $F_{i}^{\prime}$. The final decoding
channel in the original protocol is equivalent to a measurement channel, and
thus can be written as%
\begin{equation}
\mathcal{D}_{B_{n}B_{n}^{\prime}\rightarrow\hat{W}}^{n}(\tau_{B_{n}%
B_{n}^{\prime}})=\sum_{w}\operatorname{Tr}\{\Lambda_{B_{n}B_{n}^{\prime}}%
^{w}\tau_{B_{n}B_{n}^{\prime}}\}|w\rangle\langle w|_{\hat{W}},\notag
\end{equation}
where $\{\Lambda_{B_{n}B_{n}^{\prime}}^{w}\}_{w}$ is a POVM. We enlarge it as
follows in the simulation protocol:%
\begin{multline}
\mathcal{V}_{B_{n}\hat{B}_{n}\rightarrow\hat{B}_{n+1}\hat{W}}^{n}(\tau
_{B_{n}\hat{B}_{n}})= \\ \sum_{w}\sqrt{\Lambda_{B_{n}B_{n}^{\prime}}^{w}}%
\tau_{B_{n}\hat{B}_{n}}\sqrt{\Lambda_{B_{n}B_{n}^{\prime}}^{w}}\otimes
|w\rangle\langle w|_{\hat{W}},
\end{multline}
where the meaning of the notation is that the map $\sqrt{\Lambda_{B_{n}%
B_{n}^{\prime}}^{w}}(\cdot)\sqrt{\Lambda_{B_{n}B_{n}^{\prime}}^{w}}$ acts
nontrivially on the subsystems $B_{n}B_{n}^{\prime}$ in the original protocol
and trivially on all other $B$ subsystems, while mapping all $B$ systems to a
system $\hat{B}_{n+1}$ large enough to accommodate all of them. In the
simulation protocol, we also consider an isometric channel $\mathcal{U}%
_{A\rightarrow BE}^{\mathcal{N}}$ that simulates the original channel
$\mathcal{N}_{A\rightarrow B}$ as follows:
$
\mathcal{N}_{A\rightarrow B}=\operatorname{Tr}_{E}\circ\mathcal{U}%
_{A\rightarrow BE}^{\mathcal{N}}$.

Thus, the various states involved in the purified protocol are as follows. The
global initial state is $\rho_{W\hat{A}_{0}}\otimes\sigma_{F_{0}F_{0}^{\prime
}\hat{B}_{1}}$. Alice performs the enlarged encoding channel $\mathcal{U}%
_{\hat{A}_{0}F_{0}\rightarrow\hat{A}_{1}A_{1}}^{1}$ and the state becomes as
follows:%
\begin{equation}
\omega_{W\hat{A}_{1}A_{1}\hat{B}_{1}F_{0}^{\prime}}^{(1)}\equiv\mathcal{U}%
_{\hat{A}_{0}F_{0}\rightarrow\hat{A}_{1}A_{1}}^{1}(\rho_{W\hat{A}_{0}}%
\otimes\sigma_{F_{0}F_{0}^{\prime}\hat{B}_{1}}).
\end{equation}
Alice transmits the $A_{1}$ system through the first use of the extended
channel $\mathcal{U}_{A_{1}\rightarrow B_{1}E_{1}}^{\mathcal{N}}$, resulting
in the following state:%
\begin{equation}
\rho_{W\hat{A}_{1}B_{1}\hat{B}_{1}E_{1}F_{0}^{\prime}}^{(1)}=\mathcal{U}%
_{A_{1}\rightarrow B_{1}E_{1}}^{\mathcal{N}}(\omega^{(1)}_{W\hat{A}_{1}A_{1}\hat
{B}_{1}F_{0}^{\prime}}).
\end{equation}
Bob processes his systems $B_{1}\hat{B}_{1}$ with the enlarged decoding
channel $\mathcal{V}_{B_{1}\hat{B}_{1}\rightarrow F_{1}F_{1}^{\prime}\hat
{B}_{2}}^{1}$ and Alice acts with the enlarged encoding channel $\mathcal{U}%
_{\hat{A}_{1}F_{1}\rightarrow\hat{A}_{2}A_{2}}^{2}$, resulting in the state
$
\omega_{W\hat{A}_{2}A_{2}\hat{B}_{2}E_{1}F_{0}^{\prime}F_{1}^{\prime}}%
^{(2)}\equiv
(\mathcal{U}_{\hat{A}_{1}F_{1}\rightarrow\hat{A}_{2}A_{2}}%
^{2}\circ\mathcal{V}_{B_{1}\hat{B}_{1}\rightarrow F_{1}F_{1}^{\prime}\hat
{B}_{2}}^{1})(\rho_{W\hat{A}_{1}B_{1}\hat{B}_{1}E_{1}F_{0}^{\prime}}^{(1)})$.
This process iterates $n-2$ more times, resulting in the following states:%
\begin{equation}
\rho_{W\hat{A}_{i}B_{i}\hat{B}_{i}E_{1}^{i}[F_{0}^{i-1}]^{\prime}}^{(i)}  
\equiv\mathcal{U}_{A_{i}\rightarrow B_{i}E_{i}}^{\mathcal{N}}(\omega_{W\hat
{A}_{i}A_{i}\hat{B}_{i}E_{1}^{i-1}[F_{0}^{i-1}]^{\prime}}^{(i)}),\notag
\end{equation}
\vspace{-.30in}
\begin{multline}
\omega_{W\hat{A}_{i+1}A_{i+1}\hat{B}_{i+1}E_{1}^{i}[F_{0}^{i}]^{\prime}%
}^{(i+1)}   \equiv
\\
(\mathcal{U}_{\hat{A}_{i}F_{i}\rightarrow\hat{A}%
_{i+1}A_{i+1}}^{i+1}\circ\mathcal{V}_{B_{i}\hat{B}_{i}\rightarrow F_{i}%
F_{i}^{\prime}\hat{B}_{i+1}}^{i})(\rho_{W\hat{A}_{i}B_{i}\hat{B}_{i}E_{1}%
^{i}[F_{0}^{i-1}]^{\prime}}^{(i)}),\notag
\end{multline}
for $i\in\left\{  2,\ldots,n-1\right\}  $. The final enlarged decoding
 channel $\mathcal{V}_{B_{n}\hat{B}_{n}\rightarrow\hat{B}_{n+1}
\hat{W}}^{n}$\ results in the following state:
$
\rho_{W\hat{A}_{n}\hat{B}_{n+1}\hat{W}E_{1}^{n}[F_{0}^{n-1}]^{\prime}}%
\equiv
\mathcal{V}_{B_{n}\hat{B}_{n}\rightarrow\hat{B}_{n+1}\hat{W}}^{n}%
(\rho_{W\hat{A}_{n}B_{n}\hat{B}_{n}E_{1}^{n}[F_{0}^{n-1}]^{\prime}}^{(n)})$.
Note that we recover each state of the original protocol from
Section~\ref{sec:feedback-protocol}\ by performing particular partial traces.

\section{Information measure for analysis of protocol}

\label{sec:info-measure}

The key information measure that we use to analyze this protocol is as
follows:%
\begin{equation}
I(W;CF)_{\tau}+S(C|WF)_{\tau},\label{eq:1W-LOCC-monotone}%
\end{equation}
where $\tau_{WFC}$ is a classical--quantum state of the form
\begin{equation}
\tau_{WFC}=\sum_{w,f}p(w,f)|w\rangle\langle w|_{W}\otimes|f\rangle\langle
f|_{F}\otimes\tau_{C}^{w,f}.
\end{equation}
The first term in \eqref{eq:1W-LOCC-monotone} represents the classical
correlation between system $W$ and systems $CF$, while the second term
represents the average entanglement between the system $C$ of the state
$\tau_{C}^{w,f}$ and a purifying reference system.



We now establish some properties of the information measure in \eqref{eq:1W-LOCC-monotone}. Let us first recall the following lemma from \cite{BDSW96}:

\begin{lemma}
\label{lem:LOCC-monotone}Let $\phi_{AB}$ be a pure bipartite state, and let
$\{p(x),\varphi_{A^{\prime}B^{\prime}}^{x}\}$ be an ensemble of pure bipartite
states obtained from $\phi_{AB}$ by means of a 1W-LOCC channel of the
form
\begin{equation}
\sum_{x}\mathcal{U}_{A\rightarrow A^{\prime}}^{x}\otimes\mathcal{V}%
_{B\rightarrow B^{\prime}}^{x}\otimes|x\rangle\langle x|_{X}%
,\label{eq:1W-LOCC-ch}%
\end{equation}
where $\{\mathcal{V}_{B\rightarrow B^{\prime}}^{x}\}_{x}$ is a collection of
completely positive trace non-increasing maps with $\mathcal{V}%
_{B\rightarrow B^{\prime}}^{x}(\cdot)=V_{B\rightarrow B^{\prime}}^{x}%
(\cdot)[V_{B\rightarrow B^{\prime}}^{x}]^{\dag}$ and $\{\mathcal{U}%
_{A\rightarrow A^{\prime}}^{x}\}_{x}$ is a collection of isometric channels,
so that%
\begin{align}
\varphi_{A^{\prime}B^{\prime}}^{x}  & \equiv\frac{1}{p(x)}(\mathcal{U}%
_{A\rightarrow A^{\prime}}^{x}\otimes\mathcal{V}_{B\rightarrow B^{\prime}}%
^{x})(\phi_{AB}),\\
p(x)  & \equiv\operatorname{Tr}\{(\mathcal{U}_{A\rightarrow A^{\prime}}%
^{x}\otimes\mathcal{V}_{B\rightarrow B^{\prime}}^{x})(\phi_{AB})\}.
\end{align}
Then the following inequality holds
$
S(B)_{\phi}\geq S(B^{\prime}|X)_{\tau},
$
for $\tau_{XA^{\prime}B^{\prime}}\equiv\sum_{x}p(x)|x\rangle\langle
x|_{X}\otimes\varphi_{A^{\prime}B^{\prime}}^{x}$.
\end{lemma}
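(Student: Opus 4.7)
The plan is to exploit the fact that the post-1W-LOCC conditional states $\varphi^{x}_{A'B'}$ remain pure, so I can trade entropy on the $B'$ side for entropy on the $A'$ side, and then kill the remaining $\mathcal{U}^{x}_{A\to A'}$ by its isometric invariance. The resulting quantity is a convex combination of entropies on $A$, which by concavity is bounded by the entropy of the average state, and a short calculation identifies that average with $\phi_{A}$.

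Concretely, first I would note that, since $|\phi\rangle_{AB}$ is pure and the branch map $\mathcal{U}^{x}_{A\to A'}\otimes\mathcal{V}^{x}_{B\to B'}$ is a rank-one Kraus operator $U^{x}\otimes V^{x}$ (isometric on $A$, single Kraus operator on $B$), each $\varphi^{x}_{A'B'}$ is a pure state, hence $S(B')_{\varphi^{x}}=S(A')_{\varphi^{x}}$. Next, define the unnormalized marginal
\begin{equation}
p(x)\,\sigma^{x}_{A}\equiv\operatorname{Tr}_{B'}\!\left[(I_{A}\otimes V^{x})\,\phi_{AB}\,(I_{A}\otimes (V^{x})^{\dag})\right],
\end{equation}
so that $\varphi^{x}_{A'}=U^{x}\sigma^{x}_{A}(U^{x})^{\dag}$. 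Since $U^{x}$ is an isometry, $S(A')_{\varphi^{x}}=S(\sigma^{x}_{A})$, and therefore
\begin{equation}
S(B'|X)_{\tau}=\sum_{x}p(x)\,S(\sigma^{x}_{A}).
\end{equation}

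Then I would invoke concavity of the von Neumann entropy to write $\sum_{x}p(x)\,S(\sigma^{x}_{A})\leq S\bigl(\sum_{x}p(x)\,\sigma^{x}_{A}\bigr)$, and compute the average: because $\{\mathcal{V}^{x}_{B\to B'}\}$ sums to a trace-preserving map, i.e.\ $\sum_{x}(V^{x})^{\dag}V^{x}=I_{B}$, we get
\begin{equation}
\sum_{x}p(x)\,\sigma^{x}_{A}=\operatorname{Tr}_{B'}\!\left[\bigl(\mathrm{id}_{A}\otimes\!\sum_{x}\mathcal{V}^{x}\bigr)(\phi_{AB})\right]=\phi_{A}.
\end{equation}
Combining this with $S(\phi_{A})=S(\phi_{B})=S(B)_{\phi}$, which holds because $\phi_{AB}$ is pure, yields the desired inequality $S(B)_{\phi}\geq S(B'|X)_{\tau}$.

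There is no real obstacle here: the only subtlety is bookkeeping to make sure the isometry $U^{x}$ is indeed pulled out before averaging (so that concavity is applied to states on the fixed system $A$), and that the trace-preservation of $\sum_{x}\mathcal{V}^{x}$ — rather than the individual CP maps — is what collapses the sum to $\phi_{A}$. The classical flag $X$ plays no active role beyond justifying the conditional-entropy reading $S(B'|X)_{\tau}=\sum_{x}p(x)S(\varphi^{x}_{B'})$.
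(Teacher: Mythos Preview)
Your argument is correct. The paper does not actually supply a proof of this lemma: it merely \emph{recalls} the statement from \cite{BDSW96} and uses it as a black box, so there is no ``paper's own proof'' to compare against line by line. What you have written is precisely the standard argument underlying that citation --- Schmidt symmetry $S(B')_{\varphi^{x}}=S(A')_{\varphi^{x}}$ for the pure post-measurement branches, isometric invariance to strip off $U^{x}$, concavity on $A$, and the completeness relation $\sum_{x}(V^{x})^{\dag}V^{x}=I_{B}$ to collapse the average to $\phi_{A}$ --- so your proposal both fills in what the paper omits and matches the intended content of the cited reference.
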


The above lemma leads to the following one, which is the statement that the
quantity in \eqref{eq:1W-LOCC-monotone} is monotone with respect to 1W-LOCC\ channels:

\begin{lemma}
\label{lem:1W-LOCC-monotone}Let $\tau_{WFAB}$ be a classical--quantum state,
with classical systems $WF$ and quantum systems $AB$ pure when conditioned on $WF$, and let $\mathcal{M}%
_{AB\rightarrow A^{\prime}B^{\prime}X}$ be a 1W-LOCC channel of the form in
\eqref{eq:1W-LOCC-ch}. Then the following  holds%
\begin{equation}
I(W;BF)_{\tau}+S(B|WF)_{\tau}\geq I(W;B^{\prime}FX)_{\theta}+S(B^{\prime
}|WFX)_{\theta},\notag
\end{equation}
where $\theta_{WFA^{\prime}B^{\prime}X}\equiv\mathcal{M}_{AB\rightarrow
A^{\prime}B^{\prime}X}(\tau_{WFAB})$.
\end{lemma}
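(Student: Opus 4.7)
The proof strategy is to rewrite the inequality using chain-rule identities, reduce it to a per-$f$ statement, and then prove the reduced statement by combining Lemma~\ref{lem:LOCC-monotone} with the data-processing inequality for quantum mutual information. Using $I(W;BF) = I(W;F) + I(W;B|F)$ and $I(W;B|F) = S(B|F) - S(B|WF)$ yields $I(W;BF) + S(B|WF) = I(W;F) + S(B|F)$, and similarly on the right-hand side. Because $\mathcal{M}$ leaves $F$ intact, $I(W;F)_\tau = I(W;F)_\theta$, so the claim reduces to $S(B|F)_\tau \geq S(B'|FX)_\theta + I(W;X|F)_\theta$. Since $F$ is classical, it suffices to prove, for each fixed $f$ with $p(f)>0$, that
\[
S(B)_{\tau^f} \;\geq\; S(B'|X)_{\theta^f} + I(W;X)_{\theta^f},
\]
where $\tau^f_{WAB} \equiv \sum_w p(w|f)\,|w\rangle\langle w|_W \otimes |\phi^{w,f}\rangle\langle\phi^{w,f}|_{AB}$ and $\theta^f \equiv \mathcal{M}(\tau^f)$; averaging over $f$ with weight $p(f)$ then recovers the previous display.

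For this per-$f$ statement I would assemble two ingredients. First, applying Lemma~\ref{lem:LOCC-monotone} to each pure state $|\phi^{w,f}\rangle_{AB}$ individually and averaging over $w$ with weights $p(w|f)$ gives $\mathrm{(I)}\;\, S(B|W)_{\tau^f} \geq S(B'|WX)_{\theta^f}$. Second, decomposing $\mathcal{M}$ as Bob's quantum instrument $\mathcal{M}_1:B\to B'X$ followed by Alice's $X$-controlled isometry $\mathcal{M}_2:AX\to A'X$, and noting that $\mathcal{M}_2$ acts trivially on the $WB'X$-marginal, the data-processing inequality applied to the local CPTP map $\mathcal{M}_1$ on $B$ yields $\mathrm{(II)}\;\, I(W;B)_{\tau^f} \geq I(W;B'X)_{\theta^f}$. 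Expanding $I(W;B) = S(B) - S(B|W)$ (valid since $W$ is classical) and $I(W;B'X) = I(W;X) + S(B'|X) - S(B'|WX)$, inequality (II) rearranges to
\[
S(B)_{\tau^f} \;\geq\; I(W;X)_{\theta^f} + S(B'|X)_{\theta^f} + \bigl[S(B|W)_{\tau^f} - S(B'|WX)_{\theta^f}\bigr],
\]
and the bracketed term is nonnegative by (I), completing the per-$f$ argument.

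The step I expect to require the most care is the data-processing step: although $\mathcal{M}$ acts jointly on $AB$, Alice's $X$-controlled isometry leaves the $WB'X$-marginal untouched, and this is what lets one view the induced transformation on that marginal as a single local CPTP map on Bob's system to which data processing applies directly. The remaining chain-rule manipulations are routine.
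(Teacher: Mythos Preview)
Your proof is correct and rests on exactly the same two ingredients the paper uses: (I) monotonicity of $S(B|WF)$ under the 1W-LOCC channel via Lemma~\ref{lem:LOCC-monotone}, and (II) data processing for $I(W;BF)$, justified by the observation that Alice's $X$-controlled isometries are trace preserving and hence do not affect the $WFB'X$ marginal. The paper, however, organizes these ingredients far more directly: it simply proves $I(W;BF)_\tau \geq I(W;B'FX)_\theta$ and $S(B|WF)_\tau \geq S(B'|WFX)_\theta$ as two separate inequalities and adds them. Your chain-rule rewriting $I(W;BF)+S(B|WF)=I(W;F)+S(B|F)$, the per-$f$ reduction, and the rearrangement that produces the bracketed term $[S(B|W)-S(B'|WX)]$ are all valid but unnecessary detours---once you have (I) and (II), their sum is already the claim, with no need to isolate $I(W;X|F)$ or to condition on $f$ explicitly.
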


\begin{proof}
The inequality $I(W;BF)_{\tau}\geq I(W;B^{\prime}FX)_{\theta}$ holds from data
processing. In more detail, consider that $\theta_{WFB^{\prime}X}$ is equal to
\begin{align}
& =\operatorname{Tr}_{A^{\prime}}\left\{  \sum_{x}(\mathcal{U}_{A\rightarrow
A^{\prime}}^{x}\otimes\mathcal{V}_{B\rightarrow B^{\prime}}^{x})(\tau
_{WFAB})\otimes|x\rangle\langle x|_{X}\right\}  \notag \\
& =\sum_{x}\mathcal{V}_{B\rightarrow B^{\prime}}^{x}(\tau_{WFB})\otimes
|x\rangle\langle x|_{X},
\end{align}
where the last equality follows because each map $\mathcal{U}_{A\rightarrow
A^{\prime}}^{x}$ is trace preserving. So the state $\theta_{WFB^{\prime}X}$
can be understood as arising from the action of the quantum instrument
$\sum_{x}\mathcal{V}_{B\rightarrow B^{\prime}}^{x}\otimes|x\rangle\langle
x|_{X}$ on the state $\tau_{WFB}$, and since this is a channel from $B$ to
$B^{\prime}X$, the data processing inequality applies so that $I(W;BF)_{\tau
}\geq I(W;B^{\prime}FX)_{\theta}$.
The inequality $S(B|WF)_{\tau}\geq S(B^{\prime}|WFX)_{\theta}$ follows from an
application of Lemma~\ref{lem:LOCC-monotone}, by conditioning on the classical
systems $WF$.
\end{proof}

The following lemma places an entropic upper bound on the amount by which the
quantity in \eqref{eq:1W-LOCC-monotone}\ can increase by the action of a
channel $\mathcal{N}_{A\rightarrow B}$:

\begin{lemma}
\label{lem:amortize-bnd}Let $\tau_{WFAB^{\prime}}$ be a classical--quantum
state of the following form:%
\begin{equation}
\tau_{WFAB^{\prime}}=\sum_{w,f}p(w,f)|w\rangle\langle w|_{W}\otimes
|f\rangle\langle f|_{F}\otimes\tau_{AB^{\prime}}^{w,f}.
\end{equation}
Then%
\begin{multline}
I(W;BB^{\prime}F)_{\omega}+S(BB^{\prime}|WF)_{\omega}\\
-\left[  I(W;B^{\prime}F)_{\tau}+S(B^{\prime}|WF)_{\tau}\right]  \leq
S(B)_{\omega},
\end{multline}
where $\omega_{WFBB^{\prime}}\equiv\mathcal{N}_{A\rightarrow B}(\tau
_{WFAB^{\prime}})$.
\end{lemma}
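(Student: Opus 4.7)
The plan is to reduce the claimed inequality to a single conditional entropy bound $S(B|B'F)_\omega \leq S(B)_\omega$, which then follows from two short applications of subadditivity. The mechanism is a simple algebraic identity, valid for any tripartite state $\sigma_{WFX}$:
\[
I(W;XF)_\sigma + S(X|WF)_\sigma \;=\; I(W;F)_\sigma + S(X|F)_\sigma.
\]
This follows directly by expanding each mutual information and conditional entropy into von Neumann entropies; both sides equal $S(W)_\sigma + S(XF)_\sigma - S(WF)_\sigma$. I would invoke this rearrangement rather than re-derive it in the write-up.

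Applying the identity twice, once with $X = BB'$ on the state $\omega$ and once with $X = B'$ on the state $\tau$, the left-hand side of the target inequality becomes
\[
\left[ I(W;F)_\omega - I(W;F)_\tau \right] + \left[ S(BB'|F)_\omega - S(B'|F)_\tau \right].
\]
Because $\mathcal{N}_{A \rightarrow B}$ acts only on the $A$ system, the marginals on $WF$ and $WFB'$ are preserved: $\omega_{WF} = \tau_{WF}$ and $\omega_{WFB'} = \tau_{WFB'}$. Hence the first bracket vanishes, and in the second I may freely swap $\tau$ for $\omega$, leaving
\[
S(BB'|F)_\omega - S(B'|F)_\omega \;=\; S(B|B'F)_\omega.
\]

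The remaining task is to prove $S(B|B'F)_\omega \leq S(B)_\omega$. Strong subadditivity (equivalently, conditioning reduces entropy) gives $S(B|B'F)_\omega \leq S(B|F)_\omega$, and ordinary subadditivity then gives $S(B|F)_\omega = S(BF)_\omega - S(F)_\omega \leq S(B)_\omega$. I do not anticipate any real obstacle here: neither the classical nature of $F$ nor the classical--quantum structure on $WF$ is actually used, only the fact that $\mathcal{N}$ acts on $A$ alone. The main conceptual step is spotting the opening identity, which decouples the register $W$ from the rest of the analysis and makes transparent why the maximum output entropy $S(B)_\omega$ is exactly the correct upper bound on the channel-induced increase of the information measure in \eqref{eq:1W-LOCC-monotone}.
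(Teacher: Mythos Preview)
Your proof is correct and lands on exactly the same endpoint as the paper: both reduce the left-hand side to $S(B|B'F)_{\omega}$ via entropy identities and then invoke the fact that conditioning does not increase entropy. The paper reaches $S(B|B'F)_{\omega}$ by applying the chain rules $I(W;BB'F)-I(W;B'F)=I(W;B|B'F)$ and $S(BB'|WF)-S(B'|WF)=S(B|B'WF)$ and then cancelling, whereas you use the equivalent rearrangement $I(W;XF)+S(X|WF)=I(W;F)+S(X|F)$; these are the same algebra organized slightly differently.
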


\begin{proof}
Consider that%
\begin{align}
& I(W;BB^{\prime}F)_{\omega}+S(BB^{\prime}|WF)_{\omega}\notag\\
& \qquad -\left[  I(W;B^{\prime
}F)_{\tau}+S(B^{\prime}|WF)_{\tau}\right]  \nonumber\\
& =I(W;BB^{\prime}F)_{\omega}+S(BB^{\prime}|WF)_{\omega} \notag\\
& \qquad -\left[
I(W;B^{\prime}F)_{\omega}+S(B^{\prime}|WF)_{\omega}\right]  \\
& =I(W;B|B^{\prime}F)_{\omega}+S(B|B^{\prime}WF)_{\omega}\\
& =S(B|B^{\prime}F)_{\omega}-S(B|B^{\prime}WF)_{\omega}+S(B|B^{\prime
}WF)_{\omega}\\
& =S(B|B^{\prime}F)_{\omega} \leq S(B)_{\omega}.
\end{align}
All inequalities follow from definitions and applying chain rules for
mutual information and entropy. The final inequality follows because conditioning
does not increase entropy.
\end{proof}

\section{Maximum output entropy bound}

\label{sec:max-out-ent-bnd}

Now that we have identified a quantity that does not increase under
1W-LOCC\ from Bob to Alice and cannot increase by more than the output entropy
of a channel under its action, we can use these properties to establish the
following upper bound on the rate of a feedback-assisted communication protocol:

\begin{theorem}
\label{thm:upper-bnd}For an $(n,M,H,E,\varepsilon)$ protocol for classical
communication over a quantum channel $\mathcal{N}_{A\rightarrow B}$ assisted
by classical feedback, of the form described in
Section~\ref{sec:feedback-protocol}, the following bound applies%
\begin{equation}
\left(  1-\varepsilon\right)  \log_{2}M\leq n\cdot\sup_{\rho:\operatorname{Tr}%
\{H\rho\}\leq E}S(\mathcal{N}(\rho))+h_{2}(\varepsilon
).\label{eq:weak-conv-bnd-feedback}%
\end{equation}

\end{theorem}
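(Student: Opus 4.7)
The plan is an amortization argument in the purified protocol of Section~\ref{sec:purified-protocol}. I track the information measure of~\eqref{eq:1W-LOCC-monotone} evaluated with $C$ set to Bob's accumulated quantum register (descending from $B_i,\hat B_i$) and $F$ set to Bob's accumulated classical feedback record $[F_0^{i-1}]^{\prime}$. Specifically, this measure (i)~starts at zero, (ii)~is non-increasing across each composite Bob-decoding-then-Alice-encoding step (by Lemma~\ref{lem:1W-LOCC-monotone}), (iii)~grows by at most $S(\mathcal{N}(\omega^{(i)}_{A_i}))$ at each channel use (by Lemma~\ref{lem:amortize-bnd}), and (iv)~at the end is lower-bounded by $(1-\varepsilon)\log_2 M - h_2(\varepsilon)$ via Fano's inequality. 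Telescoping then gives $(1-\varepsilon)\log_2 M - h_2(\varepsilon)\leq\sum_{i=1}^n S(\mathcal{N}(\omega^{(i)}_{A_i}))$, and concavity of $\rho\mapsto S(\mathcal{N}(\rho))$ combined with the energy constraint~\eqref{eq:energy-constraint} converts the right-hand side into $n\cdot\sup_{\rho:\operatorname{Tr}\{H\rho\}\leq E}S(\mathcal{N}(\rho))$, yielding~\eqref{eq:weak-conv-bnd-feedback}.

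The key per-round identification is that the composition of Bob's enlarged decoding $\mathcal{V}^i_{B_i\hat B_i\to F_iF_i^{\prime}\hat B_{i+1}}$ followed by Alice's enlarged encoding $\mathcal{U}^{i+1}_{\hat A_iF_i\to\hat A_{i+1}A_{i+1}}$ has exactly the 1W-LOCC Kraus form~\eqref{eq:1W-LOCC-ch} from Bob to Alice: Bob realizes outcome $f_i$ via the CP map $\mathcal{V}^{i,f_i}$ while writing $f_i$ into his classical copy $F_i^{\prime}$, and Alice then applies the isometry $\mathcal{U}^{i+1,f_i}$ controlled by $f_i$. Lemma~\ref{lem:1W-LOCC-monotone} therefore applies with $X=F_i^{\prime}$, which is absorbed into the $F$ slot before the next round; its classical--quantum purity hypothesis is preserved inductively because every operation between Bob's decodings is isometric on the non-classical registers. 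The initial encoding and the final decoding are handled by the same lemma with a trivial isometry on the other party's side. For the $i$-th channel use I invoke Lemma~\ref{lem:amortize-bnd} with $A=A_i$, $B'=\hat B_i$, and $F=[F_0^{i-1}]^{\prime}$, obtaining an increment of at most $S(\mathcal{N}(\omega^{(i)}_{A_i}))$.

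For the endpoints: at the initial state $\rho_{W\hat A_0}\otimes\sigma_{F_0F_0^{\prime}\hat B_1}$, the register $W$ is independent of Bob's side and $\varphi^{f_0}_{\hat B_1}$ is pure conditioned on $F_0^{\prime}=f_0$, so both terms of the measure vanish. At the final state, data processing under the partial trace onto $\hat W$ alone gives $I(W;\hat B_{n+1}\hat W[F_0^{n-1}]^{\prime})\geq I(W;\hat W)$, and the error condition~\eqref{eq:good-protocol-condition} together with Fano's inequality yields $I(W;\hat W)\geq (1-\varepsilon)\log_2 M - h_2(\varepsilon)$. The residual entropy term $S(\hat B_{n+1}\hat W\mid W,[F_0^{n-1}]^{\prime})$ is nonnegative, since conditioning on classical systems reduces it to an average of von Neumann entropies of quantum states.

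The main obstacle is the bookkeeping: correctly identifying which registers play the roles of $A$, $B$, $F$, and $X$ in each invocation of Lemmas~\ref{lem:1W-LOCC-monotone} and~\ref{lem:amortize-bnd}, and verifying inductively that the classical--quantum purity hypothesis of Lemma~\ref{lem:1W-LOCC-monotone} is maintained as fresh feedback registers accumulate. The purification construction of Section~\ref{sec:purified-protocol} is engineered precisely to make this induction routine.
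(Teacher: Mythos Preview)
Your proposal is correct and follows essentially the same amortization argument as the paper: track the measure~\eqref{eq:1W-LOCC-monotone} through the purified protocol, invoke Lemma~\ref{lem:1W-LOCC-monotone} for each Bob-decoding/Alice-encoding round and Lemma~\ref{lem:amortize-bnd} for each channel use, then finish with concavity of $\rho\mapsto S(\mathcal{N}(\rho))$ and the energy constraint. The only cosmetic differences are that the paper runs the telescope from $\omega^{(1)}$ to $\rho^{(n)}$ (handling the first encoding and last decoding by direct data processing rather than extra invocations of Lemma~\ref{lem:1W-LOCC-monotone}), and that the residual entropy term Lemma~\ref{lem:1W-LOCC-monotone} actually hands you after the final decoding is $S(\hat B_{n+1}\mid W,[F_0^{n-1}]^{\prime},\hat W)$ with $\hat W$ in the conditioning slot $X$, not $S(\hat B_{n+1}\hat W\mid W,[F_0^{n-1}]^{\prime})$---but since you only discard it as nonnegative, this slip is harmless.
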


\begin{proof}
Let us consider the purified simulation of a given $(n,M,H,E,\varepsilon)$
protocol, as given in Section~\ref{sec:purified-protocol}. We start with
\begin{align}
\log_{2}M  & =I(W;\hat{W})_{\overline{\Phi}}\label{eq:bnd-1}\\
& \leq I(W;\hat{W})_{\rho}+\varepsilon\log_{2}M+h_{2}(\varepsilon
),\label{eq:bnd-2}%
\end{align}
where we have applied the condition in~\eqref{eq:good-protocol-condition}\ and
standard entropy inequalities. Continuing, we find that%
\begin{align}
& I(W;\hat{W})_{\rho}  \notag \\
& \leq I(W;B_{n}\hat{B}_{n}[F_{0}^{n-1}]^{\prime}%
)_{\rho^{(n)}}+S(B_{n}\hat{B}_{n}|[F_{0}^{n-1}]^{\prime}W)_{\rho^{(n)}%
}\label{eq:bnd-3}\\
& =I(W;B_{n}\hat{B}_{n}[F_{0}^{n-1}]^{\prime})_{\rho^{(n)}}+S(B_{n}\hat{B}%
_{n}|[F_{0}^{n-1}]^{\prime}W)_{\rho^{(n)}}\nonumber\\
& \qquad-\left[  I(W;\hat{B}_{1}F_{0}^{\prime})_{\omega^{(1)}}+S(\hat{B}%
_{1}|F_{0}^{\prime}W)_{\omega^{(1)}}\right]  \\
& =I(W;B_{n}\hat{B}_{n}[F_{0}^{n-1}]^{\prime})_{\rho^{(n)}}+S(B_{n}\hat{B}%
_{n}|[F_{0}^{n-1}]^{\prime}W)_{\rho^{(n)}}\nonumber\\
& \qquad-\left[  I(W;\hat{B}_{1}F_{0}^{\prime})_{\omega^{(1)}}+S(\hat{B}%
_{1}|F_{0}^{\prime}W)_{\omega^{(1)}}\right]  \nonumber\\
& \qquad+\sum_{i=2}^{n}I(W;\hat{B}_{i}[F_{0}^{i-1}]^{\prime})_{\omega^{(i)}%
}+S(\hat{B}_{i}|[F_{0}^{i-1}]^{\prime}W)_{\omega^{(i)}}\nonumber\\
& \qquad\qquad-\left[  I(W;\hat{B}_{i}[F_{0}^{i-1}]^{\prime})_{\omega^{(i)}%
}+S(\hat{B}_{i}|[F_{0}^{i-1}]^{\prime}W)_{\omega^{(i)}}\right] \notag .
\end{align}
The first inequality follows from data processing and non-negativity of
entropy. The first equality follows because $I(W;\hat{B}_{1}F_{0}^{\prime
})_{\omega^{(1)}}+S(\hat{B}_{1}|F_{0}^{\prime}W)_{\omega^{(1)}}=0$ for the
initial state $\omega_{W\hat{A}_{1}A_{1}\hat{B}_{1}F_{0}^{\prime}}^{(1)}$
(there is no classical correlation between $W$ and $\hat{B}_{1}F_{0}^{\prime}$,
and the state on system $\hat{B}_{1}$ is pure when conditioned on
$F_{0}^{\prime}W$). The last equality follows by adding and subtracting the
same term. Continuing, we find that the quantity in the last line above is
bounded as%
\begin{align}
& \leq I(W;B_{n}\hat{B}_{n}[F_{0}^{n-1}]^{\prime})_{\rho^{(n)}}+S(B_{n}\hat
{B}_{n}|[F_{0}^{n-1}]^{\prime}W)_{\rho^{(n)}}\nonumber\\
& \qquad-\left[  I(W;\hat{B}_{1}F_{0}^{\prime})_{\omega^{(1)}}+S(\hat{B}%
_{1}|F_{0}^{\prime}W)_{\omega^{(1)}}\right]  \nonumber\\
& \qquad+\sum_{i=2}^{n}I(W;B_{i-1}\hat{B}_{i-1}[F_{0}^{i-2}]^{\prime}%
)_{\rho^{(i-1)}}\notag\\
& \qquad\qquad +S(B_{i-1}\hat{B}_{i-1}|[F_{0}^{i-2}]^{\prime}W)_{\rho
^{(i-1)}}\nonumber\\
& \qquad\qquad-\left[  I(W;\hat{B}_{i}[F_{0}^{i-1}]^{\prime})_{\omega^{(i)}%
}+S(\hat{B}_{i}|[F_{0}^{i-1}]^{\prime}W)_{\omega^{(i)}}\right] \notag \\
& =\sum_{i=1}^{n}I(W;B_{i}\hat{B}_{i}[F_{0}^{i-1}]^{\prime})_{\rho^{(i)}%
}+S(B_{i}\hat{B}_{i}|[F_{0}^{i-1}]^{\prime}W)_{\rho^{(i)}}\nonumber\\
& \qquad-\left[  I(W;\hat{B}_{i}[F_{0}^{i-1}]^{\prime})_{\omega^{(i)}}%
+S(\hat{B}_{i}|[F_{0}^{i-1}]^{\prime}W)_{\omega^{(i)}}\right]  
\label{eq:bnd-to-connect-app}
\\
& \leq\sum_{i=1}^{n}S(B_{i})_{\rho^{(i)}}
\leq n S(B)_{\mathcal{N}(\overline{\omega})}
\leq n\sup_{\rho:\operatorname{Tr}\{H\rho\}\leq E}S(\mathcal{N}%
(\rho)).\label{eq:bnd-4}%
\end{align}
The first inequality follows from Lemma~\ref{lem:1W-LOCC-monotone}. The first
equality follows by collecting terms. The second inequality follows from
Lemma~\ref{lem:amortize-bnd}. The third inequality follows from concavity of
entropy and the definition of  $\overline{\omega}_{A}$ in
\eqref{eq:energy-constraint}. The final inequality follows from the energy
constraint in \eqref{eq:energy-constraint}, and by optimizing over all input
states that satisfy this energy constraint. By combining
\eqref{eq:bnd-1}--\eqref{eq:bnd-2}\ and \eqref{eq:bnd-3}--\eqref{eq:bnd-4}, we
arrive at \eqref{eq:weak-conv-bnd-feedback}.
\end{proof}

In Appendix~\ref{app:convex-comb}, we show how to extend this result to the maximum average output entropy:
\begin{theorem}
\label{cor:upper-bnd}
Let $\mathcal{N}_{A\rightarrow B} = \sum_x p_X(x) \mathcal{N}^x_{A\rightarrow B}$, where
$p_X$ is a probability distribution and $\{ \mathcal{N}^x_{A\rightarrow B}\}_x$ is a set of channels. For an $(n,M,H,E,\varepsilon)$ protocol for classical
communication over the channel $\mathcal{N}_{A\rightarrow B}$ assisted
by classical feedback, of the form described in
Section~\ref{sec:feedback-protocol}, the following bound applies%
\begin{equation}
\left(  1-\varepsilon\right)  \log_{2}M\leq n\cdot\sup_{\rho:\operatorname{Tr}%
\{H\rho\}\leq E}\sum_x p_X(x) S(\mathcal{N}^x(\rho))+h_{2}(\varepsilon
).\notag
\end{equation}
\end{theorem}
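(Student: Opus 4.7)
The plan is to adapt the proof of Theorem~\ref{thm:upper-bnd} so that the output-entropy term on the right-hand side is replaced by the average $\sum_x p_X(x) S(\mathcal{N}^x(\rho))$. The starting observation is that the classical-feedback-assisted capacity can only increase if Bob is given for free, after each channel use, the classical label $X_i$ indicating which sub-channel $\mathcal{N}^{x_i}$ was selected, so it suffices to bound the capacity of the augmented channel $\mathcal{N}^+_{A\to BX} \equiv \sum_x p_X(x)\, \mathcal{N}^x_{A\to B}\otimes |x\rangle\langle x|_X$. A naive application of Theorem~\ref{thm:upper-bnd} to $\mathcal{N}^+$ would produce a spurious $nH(X)$ penalty, since $S(\mathcal{N}^+(\rho)) = H(X) + \sum_x p_X(x) S(\mathcal{N}^x(\rho))$; the heart of the argument is to avoid this penalty by treating $X$ as belonging to the \emph{classical} side of the information measure in~\eqref{eq:1W-LOCC-monotone} rather than to the output $B$.

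Concretely, the first step is to establish the following refinement of Lemma~\ref{lem:amortize-bnd}: for a classical--quantum state $\tau_{WFAB'}$ and the post-channel state
$$\omega_{WFBB'X} \equiv \sum_x p_X(x)\, \mathcal{N}^x_{A\to B}(\tau_{WFAB'}) \otimes |x\rangle\langle x|_X,$$
one has
$$I(W;BB'FX)_\omega + S(BB'|WFX)_\omega - \bigl[I(W;B'F)_\tau + S(B'|WF)_\tau\bigr] \leq S(B|X)_\omega.$$
The proof exploits that $X$ is sampled afresh, independently of $(W,F,A,B')$, so $\omega_{WFB'X} = \tau_{WFB'} \otimes p_X$, which yields $I(W;B'FX)_\omega = I(W;B'F)_\tau$ and $S(B'|WFX)_\omega = S(B'|WF)_\tau$. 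Substituting and applying the chain rules for mutual information and conditional entropy collapses the left-hand side to $I(W;B|B'FX)_\omega + S(B|WB'FX)_\omega = S(B|B'FX)_\omega$, and strong subadditivity gives $S(B|B'FX)_\omega \leq S(B|X)_\omega = \sum_x p_X(x) S(\mathcal{N}^x(\tau_A))$, which is the desired average output entropy instead of the output entropy of the mixture.

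With this refinement in hand, the remainder of the argument copies the proof of Theorem~\ref{thm:upper-bnd} essentially verbatim, appending each step's classical label $X_i$ to the receiver's side of the information measure. Lemma~\ref{lem:1W-LOCC-monotone} continues to apply because the $X_i$ are classical registers on Bob's side and therefore play exactly the same role as the feedback classical register. The telescoping sum in~\eqref{eq:bnd-3}--\eqref{eq:bnd-to-connect-app} then produces
$$(1-\varepsilon)\log_2 M - h_2(\varepsilon) \leq \sum_{i=1}^n S(B_i|X_i)_{\rho^{(i)}} = \sum_{i=1}^n \sum_x p_X(x) S(\mathcal{N}^x(\omega_{A_i})),$$
after which concavity of the map $\rho \mapsto \sum_x p_X(x) S(\mathcal{N}^x(\rho))$ (a nonnegative combination of concave functions), together with the definition of $\overline{\omega}_A$ and the energy constraint in~\eqref{eq:energy-constraint}, yields the claimed bound. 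The main obstacle is the refined amortization: one must simultaneously use the independence of $X$ from the prior systems (so that the two mutual-information terms cancel) and its classicality (so that conditioning on $X$ correctly produces an average of the individual sub-channel output entropies); without both ingredients, a residual $H(X)$ term spoils the bound.
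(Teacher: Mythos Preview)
Your proposal is correct and follows essentially the same strategy as the paper: both arguments condition the information measure on the mixture label so that the amortization bound produces $S(B|X)_\omega=\sum_x p_X(x)S(\mathcal{N}^x(\tau_A))$ in place of $S(B)_\omega$, after which the telescoping and concavity steps of Theorem~\ref{thm:upper-bnd} go through unchanged.

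The one presentational difference is where the label enters the protocol. The paper has Bob sample $Z_i\sim p_X$ \emph{before} the $i$th channel use and send it to Alice over the free feedback channel; the channel use is then replaced by the controlled channel $\mathcal{M}_{AZ'\to B}$ with $Z_i$ fed into $Z'$. Because $Z_i$ is already present in the pre-channel state, the refined amortization lemma is literally Lemma~\ref{lem:amortize-bnd} with $F$ replaced by $FZ$ (no separate independence argument is needed), and one simply retains $Z$ when dropping $B'F$ to obtain $S(B|Z)$. Your version instead augments the channel output, so the label $X_i$ appears only \emph{after} the $i$th use; this makes the refinement of Lemma~\ref{lem:amortize-bnd} require the extra observation that $\omega_{WFB'X}=\tau_{WFB'}\otimes p_X$, which you supply. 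Each variant buys a small simplification at the expense of the other: the paper avoids the independence step but introduces a somewhat artificial simulation (Alice receives $Z_i$ but never uses it), while your augmented-channel formulation is more direct but needs the independence check and, when verifying the purity hypothesis of Lemma~\ref{lem:1W-LOCC-monotone}, requires dephasing $X_i$ in the isometric extension of $\mathcal{N}^+$ so that the conditional state on the remaining systems is pure given $W,[F_0^{i-1}]',X_1^{i}$.
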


\section{Examples}

\label{sec:examples}

From the upper bound in Theorem~\ref{thm:upper-bnd}, we conclude that the
feedback-assisted capacity of a noiseless qudit channel of dimension $d$ is
log$_{2}d$. 

Furthermore, consider a pure-loss bosonic channel \cite{GGLMSY04} with transmissivity $\eta\in\left[  0,1\right]  $.
Taking the Hamiltonian as the photon number operator and energy constraint $N_{S}\geq0$, it is known from \cite{GGLMSY04}
that this channel's energy constrained classical capacity and maximum output entropy are equal to 
$g(\eta N_{S})$, where $g(x)\equiv\left(  x+1\right)
\log_{2}(x+1)-x\log_{2}x$. Applying these results and Theorem~\ref{thm:upper-bnd}, we conclude that classical feedback does not increase 
the energy-constrained classical capacity of the pure-loss bosonic channel.

A quantum erasure channel is defined as $\rho \to (1-p) \rho + p |e \rangle \langle e|$ for $p\in[0,1]$, where $\rho$ is the state of a $d$-dimensional input system and $|e \rangle \langle e|$ is an erasure state orthogonal to all inputs. Applying Theorem~\ref{cor:upper-bnd}, we conclude that the classical capacity of the erasure channel assisted by classical feedback is equal to $(1-p) \log_2 d$, so that classical feedback does not increase the classical capacity of the erasure channel.

\section{Conclusion}

\label{sec:conclusion}

Our main result is that the maximum average output entropy of a quantum channel is
an upper bound on its classical capacity assisted by classical feedback. Note that the bound is a weak converse bound. 
Going forward from here, it would be good to find strong converse and tighter bounds on the classical capacity assisted by classical feedback.


{\footnotesize \textbf{Acknowledgements.} We acknowledge  discussions with Xin Wang, Patrick Hayden, and Tsachy Weissman.
DD is supported by a National
Defense Science and Engineering Graduate Fellowship.
YQ is supported by a Stanford Graduate Fellowship and a National University of
Singapore Overseas Graduate Scholarship. 
PWS is supported by the NSF under Grant No. CCF-1525130 and through the NSF Science and Technology Center for Science of Information under Grant No.~CCF0-939370.
MMW acknowledges  NSF grant no.~1350397. DD and MMW  thank God for all His provisions.}

\bibliographystyle{IEEEtran}
\bibliography{Ref}

\pagebreak

\appendices

\section{Maximum average output entropy bound for probabilistic mixture of channels}

\label{app:convex-comb}

In this appendix, we provide a simple proof of Theorem~\ref{cor:upper-bnd}. The main idea behind the proof is to observe that any feedback-assisted protocol of the form discussed in Section~\ref{sec:feedback-protocol}, which is for communication over a probabilistic mixture channel $\mathcal{N}_{A\rightarrow B} = \sum_z p_Z(z) \mathcal{N}^z_{A\rightarrow B}$, has a simulation of the following form:
\begin{enumerate}

\item Before the $i$th use of the channel $\mathcal{N}_{A\rightarrow B}$ in the feedback-assisted protocol, Bob selects a random variable $Z_i$ independently according to the distribution $p_Z$. He transmits $Z_i$ over the classical feedback channel to Alice.

\item Each channel use $\mathcal{N}_{A\rightarrow B}$ from the original protocol is replaced by a simulation in terms of another channel $\mathcal{M}_{AZ'\to B}$, which accepts a quantum input on system $A$ and a classical input on system $Z'$. Conditioned on the value $z$ in system $Z'$, the channel $\mathcal{M}_{AZ'\to B}$ applies $\mathcal{N}^z_{A\rightarrow B}$ to the quantum system $A$. Thus, if the random variable $Z \sim p_Z$ is fed into the input system $Z'$ of $\mathcal{M}_{AZ'\to B}$, then the channel $\mathcal{M}_{AZ'\to B}$ is indistinguishable from the original channel $\mathcal{N}_{A\rightarrow B}$.

\item Alice feeds a copy of the classical random variable $Z_{i}$ into the $i$th use of the channel $\mathcal{M}_{AZ'\to B}$.

\item All other aspects of the protocol are executed in the same way as before. Namely, even though it would be an advantage to Alice to modify her encodings and Bob to modify later decodings based on the realizations of $Z_i$, they do not do so, and they instead blindly operate all other aspects of the simulation protocol as they are in the original protocol.

\end{enumerate}
Our goal now is to establish the inequality in Theorem~\ref{cor:upper-bnd},  relating the $n$, $M$, $E$, $\varepsilon$ parameters of the original $(n,M,H,E,\varepsilon)$ protocol by using the above simulation.

The main observation to make from here is that the same proof from Lemma~\ref{lem:amortize-bnd} gives the following bound:
\begin{multline}
I(W;BB^{\prime}FZ)_{\omega}+S(BB^{\prime}|WFZ)_{\omega}\\
-\left[  I(W;B^{\prime}FZ)_{\tau}+S(B^{\prime}|WFZ)_{\tau}\right]  \leq
S(B|Z)_{\omega},
\label{eq:new-ineq-amortize}
\end{multline}
where
$\omega_{WFZBB^{\prime}}$ is the following state:
\begin{equation}
\omega_{WFZBB^{\prime}}  \equiv\mathcal{M}_{AZ'\rightarrow B}(\tau
_{WFZZ'AB^{\prime}}) 
\end{equation}
\vspace{-.3in}
\begin{multline}
\tau_{WFZZ'AB^{\prime}}  \equiv \\
\sum_{w,f,z}p(w,f,z)|w,f,z,z\rangle\langle w,f,z,z|_{W,F,Z,Z'}\otimes\tau_{AB^{\prime}}^{w,f,z}.
\end{multline}
This follows by grouping $Z$ with $F$, but then discarding only $F$ and $B'$ at the end of the proof.
We then apply this bound, and the same reasoning in the proof of Theorem~\ref{thm:upper-bnd}, except that the variables $Z_0, \ldots, Z_i$ are grouped together with the feedback variables $[F_0^{i-1}]'$ and then the same reasoning in \eqref{eq:bnd-3}--\eqref{eq:bnd-to-connect-app} applies. At this point, we invoke \eqref{eq:new-ineq-amortize} and find that
\begin{equation}
(1-\varepsilon) \log_2 M \leq \sum_{i=1}^n S(B_i|Z_{i})_{\rho^{(i)}} + h_2(\varepsilon).
\end{equation}

We can then bound the sum over entropies as follows:
\begin{align}
 \sum_{i=1}^n S(B_i|Z_{i})_{\rho^{(i)}} & \leq n S(B|Z)_{\overline{\rho}} \\
& = n \sum_z p_Z (z) S( \mathcal{N}^z (\overline{\omega}))\\
& \leq n \sup_{\rho:\operatorname{Tr}%
\{H\rho\}\leq E}\sum_z p_Z(z) S(\mathcal{N}^z(\rho)).
\end{align}

The first inequality is by concavity of conditional entropy, and the conditional entropy is defined on the averaged channel output state over uses $\overline{\rho}_{BZ} \equiv \sum_z p_Z (z) |z\rangle \langle z| \otimes \mathcal{N}^z ( \overline{\omega})$, $\overline{\omega}_A = \frac{1}{n}\sum_{i=1}^n \omega^{(i)}_{A_i}$. The second equality is by definition of conditional entropy. The third inequality follows from optimizing over states that satisfy the energy constraint in \eqref{eq:energy-constraint}. This concludes the proof of Theorem~\ref{cor:upper-bnd}. 

\end{document}